\def\full{1}% set to 1 to get full page version for e-print; use \ifFull{} or \ifFullElse{}{} for conditional text ...
\newcommand{\ifFull}[1]{#1}
\newcommand{\ifFullElse}[2]{#1}
\newcommand{\ifFull}[1]{}
\newcommand{\ifFullElse}[2]{#2}
\def\showauthnotes{1}
\newcommand{\mc}{\mathcal}
\def\mc{\mathcal}
\tikzstyle{level 1}=[level distance=3.5cm, sibling distance=3.5cm]
\tikzstyle{level 2}=[level distance=3.5cm, sibling distance=2cm]
\tikzstyle{bag} = [text width=4em, text centered]
\tikzstyle{end} = [circle, minimum width=3pt,fill, inner sep=0pt]
\colorlet{circle edge}{blue!50}
\colorlet{circle area}{blue!20}
\tikzset{filled/.style={fill=circle area, draw=circle edge, thick},
    outline/.style={draw=circle edge, thick}}
\tikzstyle{level 1}=[level distance=2cm, sibling distance=6cm]
\tikzstyle{level 2}=[level distance=3cm, sibling distance=3cm]
\tikzstyle{level 3}=[level distance=3cm, sibling distance=2cm]
\tikzstyle{bag} = [text width=10em, text centered]
\tikzstyle{end} = [circle, minimum width=6pt,fill, inner sep=0pt]
\colorlet{circle edge}{blue!50}
\colorlet{circle area}{blue!20}
\tikzset{filled/.style={fill=circle area, draw=circle edge, thick},
    outline/.style={draw=circle edge, thick}}
\definecolor{dkgreen}{rgb}{0,0.6,0}
\definecolor{gray}{rgb}{0.5,0.5,0.5}
\definecolor{mauve}{rgb}{0.58,0,0.82}
\tiny\color{gray},
\theoremstyle{plain}
\newtheorem{theorem}{Theorem}
\newtheorem{proposition}{Proposition}
\theoremstyle{definition}
\newtheorem{definition}{Definition}
\theoremstyle{remark}
\newcommand{\ignore}[1]{}
\begin{document}

% author names and affiliations
% use a multiple column layout for up to three different
% affiliations

\ifFullElse{
\title{Decentralizing Information Technology: The Advent of Resource Based Systems \\ {\small --- version 0.77 --- } }

\author{
Aggelos Kiayias%\footnote{Research partly supported by ....}  
\\
University of Edinburgh, IOHK\\
{\tt akiayias@inf.ed.ac.uk}
}}{

\title{A Consensus Taxonomy in the Blockchain Era\thanks{The full version of this paper can be found at the Cryptology ePrint Archive~\cite{DBLP:journals/iacr/GarayK18}. Research partly supported by H2020 Project Priviledge \# 780477.}}
%\titlerunning{The Bitcoin Backbone Protocol}

\author{Juan Garay\inst{1} \and Aggelos Kiayias\inst{2}
% \footnote{University of Edinburgh and IOHK.
% \url{akiayias@inf.ed.ac.uk}.
%Research partly supported by H2020 Project Priviledge \#780477.}
}

\authorrunning{Juan Garay and Aggelos Kiayias}

\tocauthor{Juan Garay, Aggelos Kiayias}

\institute{ 
Texas A\&M University \\ College Station, TX, USA. \email{garay@tamu.edu} 
\and
School of Informatics, University of Edinburgh and IOHK\\  Edinburgh, UK. \email{akiayias@inf.ed.ac.uk}
}
}

\maketitle

\begin{abstract}

The growth of the Bitcoin network during the first decade of its operation to a global scale system is a singular event in the deployment of Information Technology systems. Can this approach serve as a wider paradigm for Information Technology services beyond the use case of digital currencies? We investigate this question by introducing the concept of resource based systems and their four fundamental characteristics: (i) resource-based operation, (ii) tokenomics, (iii) decentralized service provision, and (iv) rewards sharing. We explore these characteristics, identify design goals and challenges and investigate some crucial game theoretic aspects of reward sharing that can be decisive for their effective operation. 

\end{abstract}

\section{Introduction}

A paradigm shift took place during the last decade in the way the consensus problem is looked at in Computer Science. Three decades after the seminal work of Lamport, Shostak and Pease \cite{DBLP:journals/jacm/PeaseSL80}, Satoshi Nakamoto with the Bitcoin blockchain protocol \cite{Nakamoto2008} put forth a novel way to solve the consensus problem. Traditionally, Byzantine consensus was considered to be the problem of reaching agreement between a set of processors, some of which may arbitrarily deviate from the protocol and try to confuse the ones who follow the protocol. Over time, significant research was invested into establishing the exact bounds in the {\em number} of deviating parties as well as the intrinsic complexity bounds of the problem in terms of round and message complexity. 

The approach did not address the question who assigns identities to the processors or sets up the network that connects them, or how the processors agree about the identities of all those present in the particular protocol instance.  These were 
tasks left to a system setup phase that, for all purposes, seemed sufficient to be a centralized  operation carried out by a trusted party. The success of the Internet however and the development of peer-to-peer networks in the early 2000's 
set the stage for challenging this assumption. At the same time, Sybil 
attacks \cite{DBLP:conf/iptps/Douceur02} posed a significant obstacle
to apply known consensus protocol techniques to this new setting.   
%and some initial steps in connecting the two problems were made, e.g., 
%\cite{ajk2005}, without however presenting a definite solution. 

Given the above landscape, Nakamoto's solution is unexpected. 
%Paralleling Alexander's approach when faced with the Gordian knot, 
The blockchain protocol design circumvents entirely the issue of identity and provides a solution for consensus that ``takes advantage of 
information being easy to spread but hard to stifle'' \cite{Nakamoto2009}.
In the Bitcoin blockchain, it is the computational power that different participants contribute to the protocol execution that facilitates convergence to a unique view. And as long as the deviating parties are in the minority of computational power, Nakamoto's blockchain protocol can be shown to converge to an unambiguous view that incorporates new inputs in a timely manner, as proven 
formally in      \cite{DBLP:conf/eurocrypt/GarayKL15} and subsequently refined in~\cite{DBLP:conf/eurocrypt/PassSS17,DBLP:conf/crypto/GarayKL17}.

The bitcoin blockchain however is much more than just a consensus protocol; it provides a service --- transferring digital currency between principals --- and also provides incentives  to those who engage in service provision in the form of ``mining'' the digital currency following a regular schedule. In this way, agreeing to  a joint view is not the primary objective but rather a precondition for the service to materialize. Participants need to agree on the ledger of digital asset ownership. 

Becoming a system maintainer in Bitcoin does not require anything else other than possessing the software and necessary hardware to run the protocol. 
Remarkably, there is no need to be approved by other participants as long as the underlying peer to peer network allows diffusing messages across all peers without censorship. 

Given the successful growth of the Bitcoin network at a worldwide level, a fundamental question arises: does its architecture suggest a novel way of deploying information technology (IT) services at a global scale? So far, in IT, we have witnessed two major ways of scaling systems to such level. In the ``centralized''  approach, we have organizations such as Google, Facebook and Amazon that offer world-wide operations with high quality of service. The downsides of the centralized approach is ---naturally--- centralization itself and the fact that long term common good may not necessarily align properly with company shareholder interest. In such settings,  regulatory arbitrage may deprive the public  any leverage against the service operator. A second approach is the ``federated''  one. In this case, we have the coordination of multiple organizations or entities with a diverse set of interests to offer the service in cooperation. Examples of such federated organization have been very successful and far reaching as they include the Internet itself. Nevertheless, for federated organization to  scale,  significant efforts should be invested to make sure the individual systems interoperate and the incentives of their operators remain aligned. 

Viewed in this light, the decentralized approach offered by Naka\-moto's Bitcoin system provides an alternative pathway to a globally scalable system. 
In this paper, we abstract Nakamoto's novel approach to system deployment under a general viewpoint that we term ``resource-based systems.'' 
Preconditioned on the existence of an open network, 
a resource based system capitalizes on a particular resource 
that is somehow dispersed across a wide population of {\em resource holders} and  bootstraps itself to a sufficient level of quality of service and security 
out of the self-interest of resource holders who  engage with each other via the system's underlying protocol. Depending on the design, the resulting system may scale more slowly and be less performant than a centralized system, but it offers  security and resilience characteristics that can be attractive for a number of applications especially for global scale IT. 

In the next section we introduce the four fundamental characteristics of resource based systems and then we expand on each one in a separate section describing the challenges that resource based system designers must overcome in order to 
make them successful. We also identify an end-point in the effective operation of a resource based system that arises in the form of a centralized equilibrium. We discuss the ramifications of this result and conclude with some examples of resource based systems and directions for future research. 

\section{Fundamental Characteristics of Resource Based Systems}

Consider a service described as a program $\mc{F}$ that  captures all 
the operations that users wish to perform. 
%that interacts with a population of parties $\mc{P}$. 
%$\mc{F}$ also interacts with a rudimentary clock that represents the 
%passage of time emitting clock ticks. An {\em environment} for $\mc{F}$, 
%denoted by $\mc{Z}$, is a program that schedules various 
%events for $\mc{F}$ including the passage of time and the parties from 
%$\mc{P}$ that interact with it. An execution trace is an interaction
%between $\mc{Z}$ and $\mc{F}$. 
%
%A property $P$ is a predicate over execution traces. Note that both
%safety and liveness properties can be expressed given that 
%execution traces include clock ticks. 
%
A resource-based realization of $\mc{F}$ 
is a system  that exhibits the following four fundamental characteristics, cf. Figure~\ref{fig:4c}

\begin{figure}[h]
  \includegraphics[width=\linewidth]{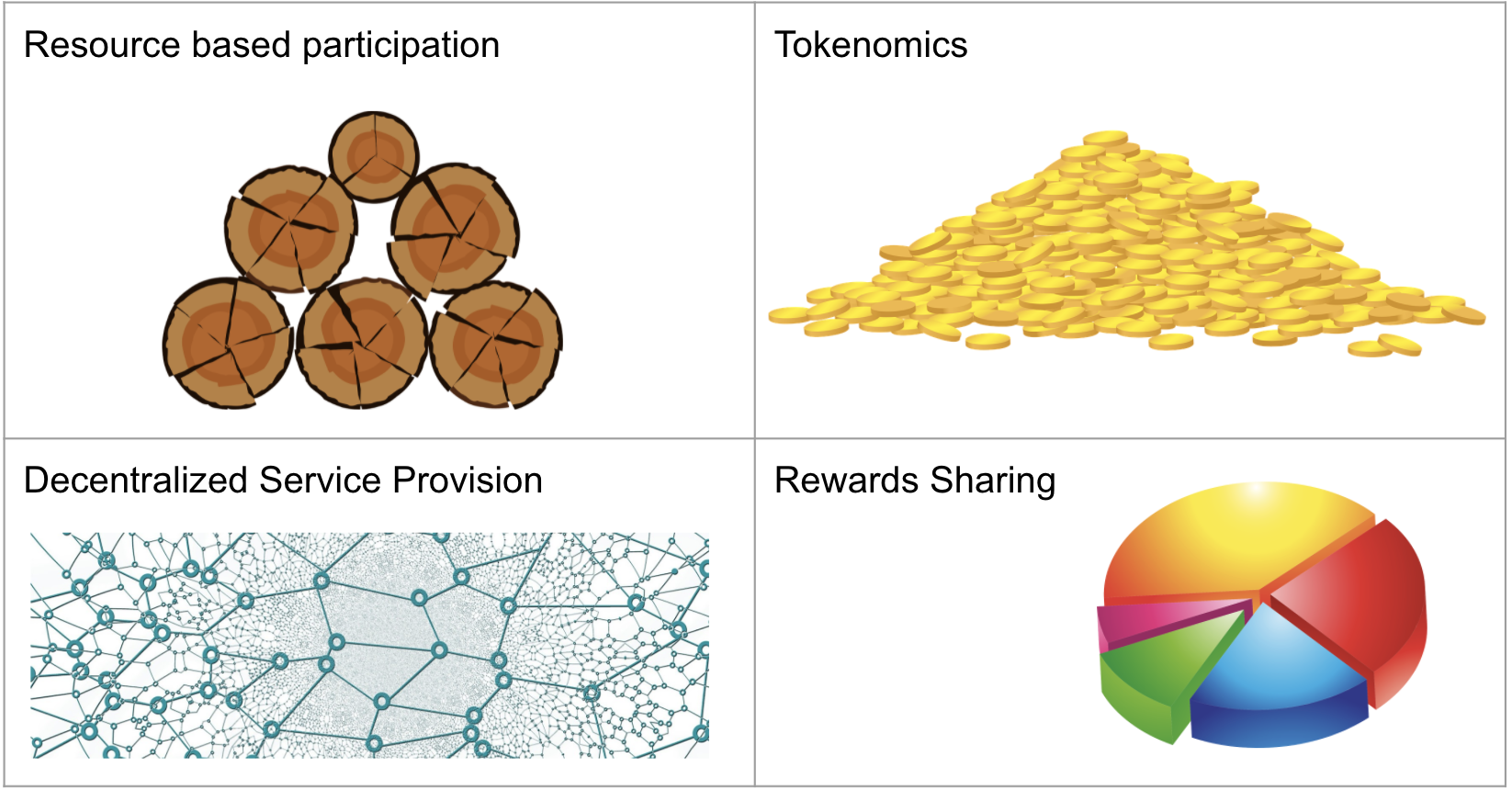}
  \caption{\label{fig:4c} The four characteristic components of resource-based systems}
\end{figure}

\begin{itemize}
\item {\em Resource-based participation.} There exists a fungible 
resource that can be acquired by anyone interested in doing so, possibly at a cost. 
Entities in possession of units of the resource can exercise it 
to participate in the maintenance of the service, possibly incurring 
further costs. 

\item {\em Tokenomics.}
The system issues a digital asset or currency that 
is used to tokenize the collective efforts of the maintainers
and reward them. 
Such digital ``coins''  are  maintained in cryptographic 
wallets and should be argued to be of sufficient utility to make system 
maintenance an attractive endeavor as a joint effort. 

\item {\em Decentralized service provision.} 
A user interacts with the service by submitting a transaction
which is openly circulated in the network of maintainers
provided it is well formed. Such well formedness 
may require the commitment of  a sufficient amount of
digital currency  or other user expenditure to prevent spamming. 
The maintainers collectively take the required actions
of $\mc{F}$ needed for the submitted transactions 
in a consistent and expedient fashion while 
the system records their efforts in a robust manner. 

\item {\em Rewards Sharing.} 
The digital assets that the system makes available
to maintainers are distributed to the active
maintainers in a regular and fair manner so that
the system's safety and liveness properties emanate
from their incentive driven participation. Any property violation 
should be a deviation from an equilibrium state that incurs costs 
to the perpetrators, hence ensuring the stable operation of the system. 
\end{itemize}

Based on the above, implementing a given system functionality 
 $\mc{F}$ by a resource based protocol requires the 
 design of a protocol with suitable cryptographic 
 and game theoretic characteristics. 
In the next four sections we delve into each characteristic in
some more detail. 

\section{Resource-Based Participation}

In classic distributed systems, system maintenance is offered by nodes that
are authorized to do so either by common agreement  (e.g., via a list of public-keys that identify them)
or by the network connections that are assumed to exist between them 
(cf.  \cite{sok-gk20} for an overview). 
Such configurations are commonly referred to in cryptographic modeling 
as setup assumptions.

Contrary to this, in the decentralized setting of 
 resource-based  protocols, participation 
to contribute to the protocol execution is attained
via demonstrating the possession of a certain resource. 
This comes in the form of a {\em proof of resource}, commonly
referred to as ``proof-of-X'' (PoX) in Bitcoin nomenclature,
where X signifies the particular resource in question. 

It is worth noting that this approach generalizes both the classic
distributed setting, since the resource in question could be the possession
of one of the authorized identities, as well as the centralized setting
--- in a trivial manner.

The two most widely cited such schemes are proof-of-work (PoW) 
and proof-of-stake (PoS). The case of PoW is exemplified in the 
Bitcoin blockchain protocol~\cite{Nakamoto2008} and is essentially a proof of 
possession of computational power. Given the characteristics  
of the PoW algorithm, a specific  
logic or architecture may be more advantageous 
and as a result, maintainers  may benefit from special purpose 
implementations. In such case, the PoW algorithm will  not be a proof
of general computational power, but rather a proof of 
ability to execute the particular algorithm utilized in the PoW scheme. 
This issue has brought forth significant criticism against   
the implementation of PoW utilized in the Bitcoin protocol (the  hashcash algorithm \cite{hashcash} instantiated by the hash function SHA256) and subsequently a number of other PoW algorithms 
were developed and deployed in alternative blockchain protocols
(these include scrypt, see \cite{DBLP:journals/rfc/rfc7914}, and 
ethash, see \cite{wood2014ethereum}, which motivated research in memory-hardness
with algorithms such as argon2 \cite{DBLP:journals/rfc/rfc9106}). 

Independently of the properties of the implementation, 
a common characteristic of PoW is that running the algorithm 
requires energy expenditure (needed to power the hardware executing
the algorithm logic). This aspect, combined with the fact that 
the source of energy cannot be discriminated, lead to concerns about
the use of non-renewable sources in the Bitcoin blockchain. 

Contrary to PoW, a PoS scheme proves possession of a virtual resource
(e.g., the possession of a certain amount of digital currency). 
A significant distinction in this class of algorithms is that issuing a PoS 
has cost independent of the amount of ``stake'' in  possession 
of the prover, while PoW typically incurs a linear cost in 
terms of computational power. Examples of PoS schemes are Ouroboros~\cite{DBLP:conf/crypto/KiayiasRDO17} and Algorand \cite{DBLP:conf/sosp/GiladHMVZ17}. 

Beyond stake and work, other types of resources, both virtual and 
physical, have been proposed
and utilized. These include ``proof of space'', whereby the prover
demonstrates possession of storage capacity, cf.\cite{DBLP:conf/crypto/DziembowskiFKP15},  and
``proof of elapsed time'', supported by Intel SGX cf. \cite{sawtooth}, whereby the 
prover demonstrates that certain wait time has elapsed, just to name
two examples. 

An important property of PoX's in the way  they are integrated
within the underlying system
is the fact that the freshness of the proof needs
 to be ensured. 
This is invariant of the specific resource used. 
In particular, it should be impossible
to ``replay'' an old proof that refers to resources possessed
at a previous point in time.  This point is crucial since 
resources are transferable between participants and hence
any proof should reflect the present state of   
resource allocation. 

A second property that is also essential is that the verification
of a PoX should be performed with low complexity, ideally 
independent of the level of resources involved in the generation
of the proof. The reasoning behind this requirement is that 
verification is something that needs to be performed network
wide, possibly even by entities that do not possess any units of the resource
used in PoX, as such entities may still need to verify the state of the system. 

\section{Tokenomics}
\label{sec:tokenomics}

The key concept behind resource-based system tokenomics is 
the {\em tokenization} 
of the efforts of the system maintainers in the form  of a {\em digital asset}
that subsequently can be utilized in some way by the maintainers
to influence the utility they extract from the system. 
The essential objective is that ---collectively--- maintainers' utility remains
positive and hence maintenance costs can be  covered and the system is viable.
The necessary economics argument needed here gives rise to the term ``tokenomics''  as a
portmanteau word derived from ``token''  and ``economics.''

The approach  to achieve the tokenomics objective suggested by Nakamoto's design and built upon and extended in numerous follow up blockchain projects is {\em market based}. The underlying digital asset of the system becomes a native digital currency that is required for accessing the service. The system also facilitates the exchange of the digital currency between parties and hence  a market is created for the IT service. 
Moreover, its availability for public trading allows speculators to estimate the value of the service in the near term and far future. 

At system launch, it is possible to have  a pre-distribution of digital coins. For instance, digital coins can be ``airdropped'' to token holders of a pre-existing digital currency. In other cases, digital coins can be made available to investors in a ``pre-sale'' stage whereby software developers of the platform may use to fund the development of the software pre-launch, cf. \cite{icos} for an exposition of some of the relevant economics considerations in this setting. 

A key characteristic of the digital currency is that it should be easily transferable between parties. The coin can be listed on ``cryptocurrency''  exchanges and hence its value can be determined vis-\`a-vis other currencies (or commodities or other instruments) that potential system users may possess. 
Users should be able to keep the digital coin on a ``wallet'', an application  running on a user's device.  This means that a  user should apply a cryptographic key in order to exercise ownership of the digital coins and issue a transaction. While users of the system have the option to perform the necessary key management themselves, they can also opt to delegate the custody of their digital assets to third parties. 

After system launch, additional coins can become available and distributed following a certain  ruleset to the system maintainers. Such ruleset is 
public knowledge and algorithmically enforced during system maintenance. 
Depending on the system, the rate of new coin availability can be constant
per unit of time (as e.g., in Ethereum originally% pre-London fork
), or
follow some function per unit of time (as e.g., in the case of Bitcoin, which has a finite total supply and hence relies on a geometric series to distribute coins to maintainers). 
In some cases, the  number of new coins that become added to the circulating supply depend on the behavior of the maintainers, e.g., in the case of Cardano, higher coin ``pledges''  by the participants increase the rate that coins become available. While the ruleset is algorithmic, enforced in the system ``ledger rules'', it can be changed by modifying the software that supports the system, assuming there is wide consensus between the system maintainers to adopt the update (see \cite{DBLP:conf/esorics/CiampiKKZ20} for a formal treatment of updates in blockchains). An example of such  update was for instance the ``London update''  of Ethereum which made the total supply of coins a variable function that depends on the transactions processed by the system. 

In some cases, an amount of coins is reserved to a development fund or 
``treasury''  that the system can subsequently utilize  to fund further
research and development. This is exemplified in systems like Decred\footnote{See \url{https://docs.decred.org/research/overview/}.}
and Cardano's Project catalyst\footnote{See \url{https://projectcatalyst.org}.} and also explored from first
principles, see e.g., \cite{DBLP:conf/ndss/ZhangOB19}. Such treasury
can also receive funding in perpetuity by taxing the system users. 

The ability to trade the system's coin enables the assessment of the system's
potential value given its public tokenomics characteristics and its 
utility.  A plunge in value of  the system's coin suggest a loss of faith
in the system's utility   and can lead to system 
instability due to weak participation in system maintenance or 
the total abandonment of maintenance and the inevitable ``death'' of the system. 
Examples of system instability are for instance the 51\% attacks
observed against a number of systems (e.g., Ethereum classic in \cite{ethc})
while there are thousands of cryptocurrency projects that have been
abandoned.\footnote{See for instance, %https://www.financialexpress.com/market/rip-cryptocurrencies-number-of-dead-coins-up-35-over-last-year-tally-nears-2000-mark/2226169/
\url{https://www.coinopsy.com/dead-coins/}. }
Trading also enables reinvesting rewards towards the acquisition of 
additional resources for protocol participation leading to interesting
compounding effects, see e.g. \cite{DBLP:conf/fc/FantiKORVW19,DBLP:conf/tokenomics/KarakostasKNZ19} for some interesting investigations in this direction. 

While speculation can drive the price of the system's digital currency up, 
the real value of the token lies with the underlying functionality
of the system. Ultimately, users should wish to obtain the token in order
to transact with the system and engage in the functionality it offers. 
Given this, speculators may choose to acquire the token at an early time 
prior to the system's utility becoming fully realized in the hope of 
selling the token for profit at a future time. 

To conclude, the key requirement for the above market-based tokenomics approach to achieve the essential objective identified in the beginning of this section  is the following:  the demand curve
for the system token, as a function of time, when projected over the supply as determined by the digital currency schedule,  should produce a  price for the system token that  offsets the collective cost of maintenance  for 
a sufficient level of  quality of service. 
This ensures that system maintenance, at least in the way it is encoded by the system software,  is an attractive endeavor to engage in.
As a final note, it is worth noting that the market-based approach, even though currently dominant in deployed resource based systems, may not be necessary for successful system tokenomics. In theory other mechanisms could also work, e.g., a reputation-based approach. As long the system is able to influence the  utility of system mainainers to a sufficient degree so that quality of service is maintained, the tokenomics objective would have been met.

\section{Decentralized Service Provision}

Now we come to the key question how to offer the prescribed
service functionality $\mc{F}$ while users have no specific point of contact
to reach. Users package their desired input to a transaction and release that transaction to 
the open network. The software running in the system maintainer side 
should receive such transaction and in the right circumstances 
propagate it to the network. 
The system maintainers should
act on this input in a timely manner and collectively the system
should reach a state where the desired output is produced. 

There are two important properties that the decentralized
implementation of the service should provide: safety and liveness. 

In terms of safety, the condition that is sought is that the system
should exhibit consistency in the way it processes requests. 
In other words, despite being realized by a fluctuating population
of system maintainers, the resulting effect of applying
a certain valid input should never be incongruous  to the 
effect that the same input would have if it was applied to $\mc{F}$. 
A safety violation for instance, would be that a user submits
two mutually exclusive  inputs, i.e., inputs that cannot be both 
applied to the state of $\mc{F}$ and subsequently some users
observe the first input as being actioned upon by the system 
while others observe similarly the second input. 

The liveness property refers to the ability of the 
system to react in a timely manner to users' input.  
Liveness, may be impacted both by congestion or even 
denial of service (DoS) attacks,  where the system's capacity
gets depleted as well as by censorship attacks where 
system maintainers choose to ignore the user's input. 
The expected responsiveness of the system 
may be affected by demand, but ideally, the system should
have the capability to scale up with increasing demand
so that quality of service is maintained. 

Given the decentralized nature of the implementation, it would
be impossible to argue the above properties in all circumstances. 
Instead, what is sought is to argue  that under reasonable
resource restrictions the properties hold in the Byzantine
sense -- i.e., an adversary cannot violate them unless it controls
a  significant amount of resources. It is worth pointing out  that 
while this is necessary,  it is not sufficient;  we would also want that given a reasonable 
modeling of the participants' utility functions, the desired
system behavior 
is an equilibrium or even a dominant strategy, given a plausible class of 
demand curves for service provision. Such strategic considerations  however
will have to wait for the next Section on reward sharing.

While system maintainers utilize their resources to  support
service provision, it should be feasible for clients of the system's
functionality $\mc{F}$ to engage in a ``light weight'' fashion with 
the system, i.e., while spending the minimum effort possible both
in terms of communication as well as computational complexity 
and resource expenditures. 

Mitigating DoS attacks can be a crucial consideration
in resource-based systems given their open and distributed nature. 
For instance, it should be hard for malicious actor to generate
a significant load of ``spam'' transactions and saturate the system's
capacity. Collecting fees to process transactions in a native digital 
currency of the resource based system  is a standard 
way that can help mitigate such DoS attacks while it also helps
generating revenue for the maintainers that is proportional to 
the transactions processed. 

A final important component of the system implementation 
is the ability of the system to collectively record  at regular intervals relevant  {\em performance metrics}  for  the system maintainers that engage with it.
While not needed for providing the service to the clients, metrics are important
so that the system records  the efforts of maintainers so they
can be rewarded appropriately. The performance metrics operation should 
be {\em robust} in the sense that, ideally,
the metrics are resilient in the Byzantine 
sense: a set of maintainers, perhaps appropriately restricted, 
should not be able to manipulate the recorded performance
of a targeted system maintainer. 
The Bitcoin blockchain uses a non-robust performance metric
(the number of blocks produced by a system maintainer) which has given
rise to attacks (cf. selfish-mining \cite{eyalsirer2014}). Other blockchain
protocols in the PoW and PoS setting developed robust metrics, 
see \cite{DBLP:conf/podc/PassS17,DBLP:conf/crypto/KiayiasRDO17}, enabling 
better game theoretic argumentation --- e.g., proving the protocol an equilibrium in \cite{DBLP:conf/crypto/KiayiasRDO17}. 

\section{Rewards Sharing}
\label{sec:rewards}

%Consider a population of resource holders $\Omega$ possessing the amounts $s_1,\ldots,s_n$ of the resource. By convention we will assume $\sum^n_{i=1} s_i =1$. A pooling configuration $\mc{P}$ is a bipartite graph over $\Omega \times \Gamma$, where $\Gamma = [m]$,  for some  $m\in \mathbb{N}$ which denotes  the number of pools. Each edge  $(i, j)\in \mc{P}$ is assigned a weight $s{i,j}$ so that $s_i \geq \sum^m_{i=1} s_{i,j}$, i.e., a party cannot exceed her total resources in the way she participates in the pools of $P$. 

%A centralized pooling configuration satisfies $m=1$ and $s_(i,1) = s_{i}$ for all $i=1,\ldots, n$.  

%Given a pooling configuration $\mc{P}\subseteq \Omega \times \Gamma$,  we denote by $c_j$ the cost of the $j$-th pool.  The rewards of the $j$-th pool in a pooling configuration  $\mc{P}$ are denoted by $\rho(\mc{P}, j)$. 
In this section we come to the topic of ``rewards sharing'' 
that focuses on how individual system maintainers are being compensated and the strategic considerations that arise from that. 
As mentioned in section~\ref{sec:tokenomics} the system may make digital coins available to the maintainers following a specific schedule. Additionally, maintainers may claim transaction fees that are provided by the users 
who engage with the system. 

The operation of rewards sharing can be ``action based'' in the sense of rewarding directly specific actions (e.g., as in the setting of the Bitcoin blockchain where a miner who produces a block can obtain a certain amount of bitcoin as well as the fees of the transactions that are included in the block), 
or ``epoch based'' where the actions of all maintainers are examined in regular intervals and, based on performance, rewards are apportioned accordingly (e.g. in the case of the Cardano blockchain such epochs last 5 days). The distinction between action or epoch based is not very essential for the exposition of this section.

Let $\Omega$ be the finite universe of all resource units.  Resource units can be  exchanged between participants and some participants may hold a larger amount of resource units than others. What is a resource unit depends on the specific details of the system; it can be a hardware unit with software  that is capable of performing fast certain operations; it can be storage device; or, it can be a virtual unit controlled by a cryptographic key and maintained in a ledger. 

We consider that at system launch each resource unit is 
labelled by its current owner and, overloading notation, we will use $\Omega$
for the set of such labelled  units.  The {\em owner partition} of $\Omega$ is
a partition $O_1,\ldots,O_n$ that aggregates the units of all owners in
separate sets, where $n$ is the number of distinct owners. Such partitioning is dynamic, since resources can change hands and transfered between principals. 

Resource units owners need not engage with the system as maintainers directly. 
Instead they can form ``resource pools'' where many of them together operate a system maintenance node. In such configurations it is common that one of the contributors is the operator and the others invest in the system operation --- however other arrangements are also possible. Such pooling arrangements  can take the form of a contract between various resource holders enabling them to operate in tandem as an organization with members having different responsibilities. The Bitcoin system over the years has exhibited  significant pooling behavior and there were times that a single pool reached or even exceeded the critical threshold of controlling $51\%$ of the total active resources. 

We will use functions of the form  $c:2^\Omega \rightarrow \mathbb{R} \cup \{\bot\}$ to express the cost  that maps a set of units 
to the numerical cost expenditure that is incurred when the owners of these resource units engage in the system over a fixed period of time. 
Note that we only require $c$ to have a non $\bot$ value for 
sets of the owner partition of $\Omega$ and sets resulting by the
joining of these sets. 

A pooling configuration $\mc{P}$ is a family of mutually disjoint
sets $P_1,\ldots,P_m \subseteq\Omega$ accompanied by a reward splitting strategy 
for each pool that describes how to distribute rewards to the resource holders who participate (if they are more than one). It is important to note that 
rewards sharing at the protocol level only goes up to the level of the pool; 
beyond that, by the nature of resource based systems, it can be infeasible for the system to distinguish between a pool of a single resource holder compared to one where many join their resources.  

Pooling configurations are an important subject of study in resource based
systems since they reflect the ``decentralization'' of the underlying system. 
A stable centralized pooling configuration, e.g., one where all operators 
have joined a single pool, $\Omega$, indicates that the resource based system 
can be retired and substituted by a centralized system supported by an
organization reflecting the constituent membership of the single pool. 
In such circumstances, the benefits of using a resource based system entirely 
dissipate. As a result, it is of interest to understand in what settings such centralized pooling configurations may arise. 

Before we proceed, it is useful to introduce a metric for resource sets. 
We will use that metric to signify the influence that any single pool
can exert on the protocol. 
We denote the measure $\sigma : 2^\Omega \rightarrow \mathbb{R}$
of the resources of a pool $P$ by $\sigma(P)$. 
We require that $\sigma(\Omega) =1, \sigma(\emptyset)=0$ and 
$\sigma(P\cup Q) = \sigma(P) + \sigma(Q)  - \sigma(P\cap Q)$. 

Rewards sharing in resource based systems is controlled by a function
$\rho$; without loss of generality we count the rewards distributed in
a fixed period of time (the same period over which we also consider costs). 
Let $\rho(P,\mc{P})$ be the rewards
provided to a given pool $P$ by the system given a pooling configuration $\mc{P}$. 
A reward function 
$\rho(\cdot)$ is called {\em simple} if $\rho(P,\mc{P}) = \rho(P, \mc{P}') $
for any pooling configurations $\mc{P}, \mc{P}'$ that contain $P$. 
For simple reward functions we can write $\rho(P)$ to denote the rewards that
are provided to $P$. Note moreover
that $\rho(\emptyset) = 0$. 
A reward function is continuous if 
it holds that for every $P\subseteq \Omega, \epsilon>0$
there is a $\delta>0$ such that for any $P'$, 
$|\sigma(P)-\sigma(P')| < \delta \implies
|\rho(P)-\rho(P')| < \epsilon$. 
In the exposition of this section we consider only continuous simple reward 
functions. 

The reward function $\rho$ is a critical component of a resource based
system. We put forth the following set of axioms regarding the reward function 
$\rho $. As we will see, these axioms have certain implications regarding the pooling configurations that may arise in the system. 

\begin{itemize}
\item Resource fungibility. For any $P,Q$, 
$( \sigma(P) =\sigma(Q) )  \rightarrow ( \rho(P) = \rho(Q) )$. 
This means that the system does not distinguish between
particular resource units with respect to rewards. 

%\item Operational independence. For any $P \in \mc{P}, \mc{P}'$, it holds that $c(P,\mc{P}) =c(P, \mc{P}')$. This reflects the setting where the costs of a particular pool are independent of the global pooling configuration.

%\item Economies of scale. For any $P_1, P_2$ and pooling configurations $\mc{P}$ that includes $P_1 \cup P_2$ and  $\mc{P}'$ that includes $P_1,  P_2$ it holds $c(P_1 \cup P_2, \mc{P})   \leq c(P_1, \mc{P}') + c(P_2, \mc{P}') $.  This reflects the property that by aggregating resources resource holders do not lose in efficiency. 

\item 
Sybil resilience. It holds that $\rho(P_1\cup P_2) \geq \rho(P_1) + \rho(P_2)$ for any disjoint sets $P_1,P_2\subseteq \Omega$.
This reflects the desideratum that an operator controlling some resources
will not gain anything by splitting their resources into two pools. 

%that the reward per unit of stake should be non-decreasing as a function of the pool resources. In particular in any case $P_1 \subseteq P_2$,  it holds that$$ \frac{\rho(P_1)}{\sigma(P_1)}\leq \frac{\rho(P_2)}{\sigma(P_2)} $$ 
%where $P_1,P_2$ are two nonempty sets in $\Omega$.
%Otherwise it would make sense that resource holders would prefer to split
%their assigned resources into smaller units. This implies that 
%$\rho(\Omega) \geq \rho(P) /\sigma(P)$, i.e., there is no pool $P$
%that receives better rewards per unit of stake than $\Omega$. 
%Indeed, if $\rho(\Omega) < \rho(P) /\sigma(P) $, 
%then $\rho(P) /\sigma(P) > \rho(\Omega) \geq \rho(P) + \rho(\Omega \setminus P)$,
%which is  equivalent to 
%$$ \rho(P) / \sigma(P)  > \rho(\Omega\setminus P) / \sigma(\Omega \setminus P) $$

\item 
Egalitarianism. It holds $\rho(P) \leq \rho(Q) + \rho(R)$ for any disjoint sets $P,Q,R$ such that $\sigma(P) = \sigma(Q) + \sigma(R)$. 
This reflects the desideratum that a ``rich''  operator controlling resources 
$P$ does not obtain more rewards than two ``poorer'' operators controlling 
in aggregate the same amount of resources. 

\end{itemize}

Given the above axioms, we will prove that a centralized pooling configuration can be a Nash equilibrium in the strong sense, i.e., even taking into account arbitrary coalitions~\cite{aumann}. We need two more properties to be defined first. 

\begin{definition}
Consider a pooling configuration $\mc{P}$.
A pool $P\in \mc{P}$ is called: 
(i) {\em viable},  if and only if $\rho(P)\geq c(P)$,
(ii) {\em cost efficient}, if and only if 
$c(P)/\sigma(P)\leq c(P')/\sigma(P')$, for any $P'\subseteq P$,
i.e., its cost per unit of resource is no worse than any of its subsets. 
\end{definition}

We are now ready to state and prove the following theorem. 

\begin{theorem}
If $\Omega$ is viable and cost efficient, then there is a centralized pooling configuration that is a Strong Nash equilibrium. 
\end{theorem}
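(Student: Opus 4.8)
The plan is to exhibit the centralized configuration $\mc{P}^\star=\{\Omega\}$, equip it with the reward‑splitting strategy that pays owner $i$ the amount $x_i:=\sigma(O_i)\,\big(\rho(\Omega)-c(\Omega)\big)$ (i.e.\ the grand pool's net profit, shared in proportion to resource), and show that no coalition of owners can make all of its members strictly better off by deviating. Since the within‑coalition split of any deviation payoff is under the deviators' control, a coalition $C$ controlling the resource set $P_C:=\bigcup_{i\in C}O_i$ has a profitable strong deviation only if it can reorganise $P_C$ into a sub‑configuration whose \emph{total} net reward strictly exceeds $\sum_{i\in C}x_i=\sigma(P_C)\big(\rho(\Omega)-c(\Omega)\big)$. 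Because $\rho$ is simple, the reward to each of the deviators' pools does not depend on what the remaining owners do, so it suffices to upper bound, over all partitions $Q_1,\dots,Q_k$ of $P_C$ into unions of owner sets, the quantity $\sum_j\rho(Q_j)-\sum_j c(Q_j)$ (leaving some units idle cannot help either, since $\rho(\Omega)\ge c(\Omega)$).

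The key step is a \emph{linearity lemma}: the axioms force $\rho(P)=\sigma(P)\,\rho(\Omega)$ for every $P$. First, resource fungibility lets us write $\rho(P)=f(\sigma(P))$ for a function $f$ on the range of $\sigma$ with $f(0)=0$. Sybil resilience gives $f(a+b)\ge f(a)+f(b)$ whenever $a,b$ are realised by disjoint sets, while egalitarianism gives the reverse inequality in the same situation; hence $f$ is additive on the relevant values, and with the continuity axiom it is linear, so $f(s)=s\,\rho(\Omega)$. The point needing care is the availability of enough disjoint sets of the prescribed measures to instantiate the axioms (e.g.\ over dyadic or rational partitions of $\Omega$ into equal‑measure pieces); this is where one uses that the resource space is rich enough — $\sigma$ non‑atomic, or its atoms negligible against $\Omega$ — and I would isolate it as a preliminary regularity observation.

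With linearity in hand the bound is immediate. Cost efficiency of $\Omega$ gives $c(Q_j)\ge\sigma(Q_j)\,c(\Omega)$ for each $Q_j\subseteq\Omega$ (recall $\sigma(\Omega)=1$), hence $\sum_j c(Q_j)\ge\sigma(P_C)\,c(\Omega)$; linearity gives $\sum_j\rho(Q_j)=\sigma(P_C)\,\rho(\Omega)$. Therefore $\sum_j\rho(Q_j)-\sum_j c(Q_j)\le\sigma(P_C)\big(\rho(\Omega)-c(\Omega)\big)=\sum_{i\in C}x_i$, so $C$ has no strictly‑improving joint deviation. Specialising to singleton coalitions and invoking viability of $\Omega$ yields $\rho(\Omega)-c(\Omega)\ge0$, so every $x_i\ge0$ and the splitting strategy is well defined (and $\mc{P}^\star$ is itself viable); and the displayed inequality, read for all $C$ together with $\sum_i x_i=\rho(\Omega)-c(\Omega)$, says exactly that $(x_i)$ lies in the core of the induced coalitional game, which translates to $\mc{P}^\star$ being a Strong Nash equilibrium in the sense of Aumann.

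The main obstacle I anticipate is the linearity lemma: honestly reconciling the superadditivity from Sybil resilience with the subadditivity from egalitarianism into $\rho=\rho(\Omega)\cdot\sigma$ under only continuity, which forces one to be explicit about the structural hypotheses on $(\Omega,\sigma)$ that make ``split $\Omega$ into $q$ equal pieces'' a legitimate move. Everything downstream — the coalition‑value bound through cost efficiency, and the passage from ``allocation in the core'' to ``Strong Nash equilibrium'' — is routine once the deviation model is pinned down.
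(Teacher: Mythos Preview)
Your proposal is correct and follows essentially the same approach as the paper: proportional net-profit sharing within the grand pool, linearity of $\rho$ derived via Cauchy's functional equation (superadditivity from Sybil resilience, subadditivity from egalitarianism, plus continuity), and the cost-efficiency bound $c(S)\ge\sigma(S)\,c(\Omega)$ to finish. You are in fact slightly more careful than the paper in handling deviations that split the coalition into several pools and in flagging the domain issue for $\hat\rho$, and your core-of-the-coalitional-game framing is a clean way to state the conclusion, but the underlying argument is the same.
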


\begin{proof}
Consider first any pooling configuration $\mc{P}$ 
and  $P\in \mc{P}$ such that it is viable and cost efficient. 
The rule to distribute rewards within $P$ is the following.  
Any   subset $S$ of $P$ corresponding to a participant 
receives rewards equal to $\sigma(S) (\rho(P)-c(P))$,
i.e., a ``fair'' share of the total rewards available. 
\ifFullElse{ 
We observe that: 
\begin{equation} \label{eq:theorem}  \sigma(S) (\rho(P) - c(P))
%= \sigma(P') \rho(P) - \sigma(P') c(P))
\geq \sigma(S) \rho(P) - \sigma(P) c(S) \geq \sigma(P)\rho(S)- \sigma(P)c(S)= \sigma(P) ( \rho(S) - c(S) )
\end{equation}
}{
We observe that: 
\begin{equation}
\begin{align*} \label{eq:theorem}  \sigma(S) (\rho(P) - c(P))
%= \sigma(P') \rho(P) - \sigma(P') c(P))
\geq \sigma(S) \rho(P) - \sigma(P) c(S) \geq \\ \sigma(P)\rho(S)- \sigma(P)c(S)= \sigma(P) ( \rho(S) - c(S) )
\end{align*}
\end{equation}
}

The first inequality follows  from 
the cost efficiency of $P$, which implies $c(S)/\sigma(S)\geq c(P)/\sigma(P)$. 

For the second inequality we need to prove $\rho(P)/\sigma(P) \geq \rho(S)/\sigma(S)$, i.e., the rewards per unit of resource is no worse
for $P$ compared to $S$. 
We will prove something stronger. 
For any $x\in [0,1]$ we define 
$\hat \rho(x)$  to be equal to the value $\rho(P)$ for some $P$ with $\sigma(P)=x$. 
The function $\hat \rho$ is well defined due to resource fungibility. 
Furthermore, observe that
$\hat \rho$ is superadditive due to Sybil resilience, 
and subadditive due to Egalitarianism. It follows that $\hat\rho$ 
satisfies Cauchy's functional equation and as a result,
due to the continuity of $\hat \rho(\cdot)$,
it holds that $\hat\rho(x) = \gamma x$,
for some $\gamma \in \mathbb{R}$. From this we derive that  $\rho(S) / \sigma(S) = \gamma = \rho(P)/\sigma(P)$. 
\ignore{%% OLD PROOF
Consider the centralized pooling configuration $\{ \Omega \}$ 
with a pooling reward sharing agreement that assigns to any $P\subseteq \Omega$
the portion of available rewards  equal to $\sigma(P)$ after removing the operational costs. 

Suppose now there is a set of players controlling resources 
$P$ considering to break the pool and operate alone in a
pooling configuration $\mc{P} = \{ P , \Omega\setminus P\}$. 
The total payoff received by the diverging coalition in the alternative configuration
is $\rho(P) - c(P, \mc{P})$.

Observe that the rewards received by the players 
in the centralized pooling configuration are
$\sigma(P) (\rho(\Omega) - c(\Omega, \{\Omega\}))$. We prove the following. 

\[ \sigma(P) (\rho(\Omega) - c(\Omega, \{\Omega\}))
= \sigma(P) \rho(\Omega) - \sigma(P) c(\Omega, \{\Omega\}))
\geq \sigma(P) \rho(\Omega) - c(P,\mc{P}) \geq \rho(P)- c(P,\mc{P}) \]

Observe that the first inequality follows directly from 
economies of scale, since
$c(\Omega, \{\Omega\})) \leq \min \{ c(P,\{P, \Omega \setminus P\}) / \sigma(P), 
c(\Omega \setminus P ,\mc{P}) / \sigma(\Omega\setminus P) \}$. 

For the second inequality we need to prove $\rho(\Omega) \geq \rho(P)/\sigma(P)$. 
We consider the following. 
For any $x\in [0,1]$ we define 
$\hat \rho(x)$  to be equal to the value $\rho(P)$ for some $P$ with $\sigma(P)=x$. 
The function $\hat \rho$ is well defined due to resource fungibility. 
Furthermore, observe that $\hat \rho(1) = 1$, 
$\hat \rho$ is superadditive due to Sybil resilience, 
and subadditive due to Egalitarianism. It follows that $\hat\rho$ 
satisfies Cauchy's functional equation and as a result,
due to the continuity of $\hat \rho(\cdot)$,
it holds that $\hat\rho(x) = \gamma x$,
for some $\gamma \in \mathbb{R}$. We now have $\rho(P) / \sigma(P) \leq  \hat\rho(x)/x \leq \gamma \leq \hat\rho(1) \leq \rho(\Omega)$.

This completes the proof as the rewards received by $P$ within $\Omega$ are no worse
than operating a pool individually. 
}%%%%%%%
 
We conclude by setting $P=\Omega$. Due to $\sigma(\Omega)=1$, by equation~\ref{eq:theorem}, the profit of  $S$, equal to $\rho(S)-c(S)$, is no better than the rewards received as part of the centralized configuration which equals to $\sigma(S)( \rho(\Omega) - c(\Omega))$. This implies that any set of participants will be no better off operating their own pool separating from the centralized pool $\Omega$. The same also holds in case they decide to run multiple separate pools. 
\end{proof}

It follows that it is of interest to detect large cost efficient resource sets.
To this end, we examine an important class of cost functions, that  we call
``operator-linear.''
First, let $O_1,\ldots,O_n$ be the owner partition of $\Omega$. 
The cost function is operator linear if it holds that  (i) for all $i=1,\ldots,n$, 
$c(O_i) = c_i + d_i \cdot \sigma(O_i) $, 
and (ii) for any $P = \cup^m_{j=1} O_{i_j}$, it holds the cost of $P$ is defined by  the following function
$$c(P) = d_{i_1}\cdot  \sigma(O_{i_1}) + \ldots d_{i_m} \cdot \sigma(O_{i_m}) + \min\{ c_{i_1}, \ldots, c_{i_m}\}.$$ 

This class of cost functions captures, at a certain level of abstraction, both proof of work and proof of stake systems where pooling is organized so that the operator becomes the resource holder with the smaller individual fixed cost. 
For proof of stake, given the cost incurred for processing is independent of the
resources held, one can set $d_i = 0$ for all $i=1,\ldots,n$. 
For proof of work, we observe the linear dependency in the amount of resources held that can be reflected by choosing a suitable value for $d_i$ derived from electricity costs and equipment characteristics used for performing the proof of work operation.
\ifFullElse{ 
We now prove the following proposition.}{ 
The following proposition is simple to prove; we leave the proof as an exercise.}

\begin{proposition}
\label{prop:costeff}
Given an operator-linear cost function, $\Omega$ is cost efficient, as long as $\Delta \leq \min\{c_1,\ldots,c_n\} $, where $\Delta =\max \{d_i - d_j \mid i,j \in [n]\}$. 
\end{proposition}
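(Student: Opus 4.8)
The plan is to unfold the definition of cost efficiency for $\Omega$ directly using the operator-linear structure and reduce it to a single inequality comparing the ``spread'' of the linear coefficients $d_i$ against the smallest fixed cost. Since $\sigma(\Omega)=1$, $\Omega$ is cost efficient if and only if $c(\Omega)\le c(P')/\sigma(P')$ for every $P'\subseteq\Omega$, and because $c$ is only defined on owner sets and their unions, it suffices to let $P'=\bigcup_{j=1}^{m}O_{i_j}$ range over such unions. I would set $s=\sigma(P')$, write $c_{\min}=\min\{c_1,\dots,c_n\}$, let $c'_{\min}=\min\{c_{i_1},\dots,c_{i_m}\}\ge c_{\min}$, and introduce the resource-weighted averages $D=\sum_{i=1}^{n}d_i\,\sigma(O_i)$ over all of $\Omega$ and $\bar d_{P'}=\tfrac1s\sum_{j=1}^{m}d_{i_j}\,\sigma(O_{i_j})$ over $P'$. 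Plugging the operator-linear formula into both sides, $c(\Omega)=D+c_{\min}$ and $c(P')/s=\bar d_{P'}+c'_{\min}/s$, so the target inequality becomes, after rearranging, $D-\bar d_{P'}\le c'_{\min}/s-c_{\min}$.

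Next I would bound the two sides separately. For the right-hand side, using only $c'_{\min}\ge c_{\min}$ and $s\le 1$, one gets $c'_{\min}/s-c_{\min}\ge c_{\min}/s-c_{\min}=(1-s)\,c_{\min}/s\ge(1-s)\,c_{\min}$. For the left-hand side, the key observation is that writing $\Omega$ as the disjoint union of $P'$ and $\Omega\setminus P'$ gives $D=s\,\bar d_{P'}+(1-s)\,\bar d_{\Omega\setminus P'}$, hence $D-\bar d_{P'}=(1-s)\bigl(\bar d_{\Omega\setminus P'}-\bar d_{P'}\bigr)$; since both $\bar d_{P'}$ and $\bar d_{\Omega\setminus P'}$ are convex combinations of $d_1,\dots,d_n$, their difference is at most $\max_i d_i-\min_i d_i=\Delta$, so $D-\bar d_{P'}\le(1-s)\Delta$. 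Combining, it is enough to have $(1-s)\Delta\le(1-s)\,c_{\min}$, which follows from the hypothesis $\Delta\le\min\{c_1,\dots,c_n\}$; the degenerate case $s=1$ gives equality directly. Along the way I would also note that the general cost formula specializes correctly to a singleton $O_i$, namely $c(O_i)=d_i\,\sigma(O_i)+c_i$, consistent with condition (i).

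The main obstacle is precisely this left-hand-side estimate: the crude bound $D-\bar d_{P'}\le\Delta$ is too weak to beat $c'_{\min}/s-c_{\min}$ when $s$ is close to $1$, so one has to first extract the factor $(1-s)$ — which amortizes exactly like the shared fixed cost on the other side — before comparing. Everything else is routine bookkeeping with the definitions of $\sigma$, the owner partition, and the operator-linear cost.
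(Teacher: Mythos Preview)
Your argument is correct and, at its core, coincides with the paper's: both reduce cost efficiency of $\Omega$ to the single inequality $(c_{\min}+\sum_i d_i x_i)\cdot s \le c'_{\min}+\sum_{j\in P'} d_j x_j$ and establish it by bounding the cross-term involving differences $d_i-d_j$ by $\Delta$ and then invoking $\Delta\le c_{\min}$ together with $s\le 1$. The paper carries this out by raw index manipulation---starting from $\Delta\cdot s\le c'_{\min}$, multiplying through by $x_i$ for $i\notin P'$, summing, and then adding matching terms to both sides---whereas you package the same computation via the weighted averages $\bar d_{P'}$, $\bar d_{\Omega\setminus P'}$ and the identity $D-\bar d_{P'}=(1-s)(\bar d_{\Omega\setminus P'}-\bar d_{P'})$, which makes the appearance of the factor $(1-s)$ on both sides transparent. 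This is a cleaner presentation of the same idea rather than a genuinely different route; the only minor extra care you need (and you noted it) is the degenerate case $s=1$, where $\bar d_{\Omega\setminus P'}$ is undefined but the inequality is immediate.
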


\ifFull{ 
\begin{proof}
We want to prove that $c(\Omega)\leq c(S)/\sigma(S)$ for any $S\subseteq \Omega$. 
Let us denote by $x_i = \sigma(O_i)$, where $O_1,\ldots,O_n$ is the owner partition of $\Omega$.  Without loss of generality we assume that 
$S$ includes the operators $O_1,\ldots,O_k$ for some $k\leq n$. 
We also denote by $c_j = \min\{c_1,\ldots,c_j\}$. We want to prove that
$$ ( c_n + \sum^n_{i=1} d_i x_i ) \cdot \sum^k_{i=1} x_k \leq c_k + \sum^k_{j=1} d_j x_j $$
We observe that based on the condition in the proposition's statement,
we have that $ \Delta \cdot \sum^k_{i=1} x_j \leq c_k$ which implies
that $\sum^k_{j=1} (d_i-d_j) x_j x_i \leq c_k x_i$. 
Summing for all $i=k+1,\ldots,n$, we have that 

$$\sum^n_{i=k+1} \sum^k_{j=1} (d_i-d_j) x_i x_j \leq c_k \sum^n_{i=k+1} x_i
\Rightarrow 
\sum^n_{i=k+1} \sum^k_{j=1} d_i x_i x_j \leq 
\sum^n_{i=k+1} \sum^k_{j=1} d_j x_i x_j  
+ c_k \sum^n_{i=k+1} x_i
$$

We add now in both sides of the inequality the terms $\sum^k_{i,j=1} d_ix_ix_j$ and
$c_k \sum^k_{i=1} x_i$ and by the observation $c_n\leq c_k$, we have the inequality 
$$ c_n \sum^k_{i=1} x_i + \sum^n_{i=1} \sum^k_{j=1} d_i x_i x_j 
\leq \sum^n_{i=1} \sum^k_{j=1} d_j x_i  x_j  + 
c_k \sum^n_{i=1} x_i
$$
From this we obtain $(c_n + \sum^n_{i=1} d_i x_i)\sum^k_{i=1} x_i   \leq c_k + \sum^k_{j=1} d_j x_j $ that proves $c(\Omega) \leq c(S) /\sigma(S)$.
\end{proof} 
} 

Based on the above, we obtain that  if $\Omega$ is viable
and the conditions of proposition~\ref{prop:costeff} are satisfied,
the system will have a strong Nash equilibrium that centralizes to one operator. 
This applies to both proof of stake as well as proof of work in the
case when differences in electricity costs are small across operators. 

On the other hand, in settings where cost efficiency does not hold, the joining of two resource
sets can become undesirable for one of the two operators. 
A weaker property for cost functions that captures 
``economies of scale''  and
dictates that $c(P_1\cup P_2) \leq c(P_1) + c(P_2)$, 
(reflecting the property that merging two pools results in no higher costs
% per unit of stake 
compared to the two pools operating alone),
is insufficient by itself to imply a centralized pooling configuration. 

Even in the case of operator-linear cost functions however, 
careful design of the reward function and analysis of  Nash dynamics 
can show that better equilibria arise and can be reachable by the
participants. For instance, if costs for ``off-chain'' pooling are high,
the rewards sharing schemes developed and analyzed in \cite{DBLP:conf/eurosp/BrunjesKKS20} can be seen to converge 
to highly participatory decentralized equilibria for constant cost functions.

\section{A high-level blueprint for a stake-based system}

Given the four characteristics outlined in the previous sections, 
we will provide an illustration how to apply those to 
develop and deploy a {\em stake-based} system. 
We assume as preconditions that the developer has 
already a classical distributed protocol implementation 
of the service $\mc{F}$ for, say, $k$ parties 
and has an understanding of the service maintenance  costs and 
user demand. 

Adopting a stake-based approach, the resource will be digital coins.
The developer mints an initial supply of such coins and 
disperses them over an existing population of recipients. 
This can be achieved by e.g., ``airdropping'' such digital coins
to cryptocurrency holders of an existing blockchain platform. 
Due to this distribution event, the recipients become the stakeholders
of the system. 

A tokenomics schedule that takes into account the expected demand 
is determined and programmed into a smart
contract $\mc{S}$. This contract will acknowledge the initial supply of coins
as well as the schedule under which any new coins will be made available to the maintainers -- the entities running the $k$-party protocol. 
Following market-based tokenomics the contract will also manage
incoming transaction fees. 

Decentralized service provision is comprised of four parts.  
One is the $k$-party protocol that implements $\mc{F}$; 
the second is a proof-of-stake blockchain protocol that offers ``dynamic availability'' 
(e.g., such as Ouroboros, \cite{DBLP:conf/ccs/BadertscherGKRZ18,DBLP:conf/crypto/KiayiasRDO17}) -- i.e., a protocol that can handle 
a wide array of participation patterns without the requirement to be able
to predict closely the active participation level). Inputs to the protocol will be recorded on chain,
an action that will incur transaction costs to be withheld by $\mc{S}$. 
The third part is a ``proof-of-service'' 
sub-system that should enable any system
maintainer running the $k$-party protocol to demonstrate their efforts
in a robust way. The verifier of such proofs will be the smart contract 
$\mc{S}$ which will determine a performance factor for each maintainer. Finally, the fourth part is an algorithm that will
parse the blockchain at regular intervals and determine the $k$ parties to
run the $k$-party protocol for $\mc{F}$. This can be done e.g., by weighted sampling \cite{DBLP:reference/algo/EfraimidisS16}, taking into account the stake supporting each operator. 

For rewards sharing, we need a mechanism  to incentivize the stakeholders
to organize themselves into at least $k$ well functioning nodes 
that will execute the multiparty protocol for $\mc{F}$ when selected.
To achieve this we can deploy the reward sharing scheme of 
\cite{DBLP:conf/eurosp/BrunjesKKS20} over the underlying PoS blockchain; for that scheme it is  shown how incentive driven engagement by the stakeholders can determine a set of $k$ nodes at equilibrium. The reward scheme will be coded into the contract $\mc{S}$ and will reward the  stakeholders at regular intervals using the available supply from the tokenomics schedule and the transaction fees collected. The performance factor of each operator will influence the rewards, adjusting them in proportion to the operator's efforts. 

The developer will produce an implementation of the above system and will make it available for download. A launch date for the system will be set as well as an explanation for its purpose. At this point, the developer's engagement can stop. The stakeholders ---the recipients of the newly minted digital coin--- can examine the proposition that the system offers and choose whether to engage or not. If a non-negligible number of them chooses to engage out of their own self-interest (which will happen if the developer's predictions regarding the long term utility of $\mc{F}$ are correct) the system will come to life bootstrapping itself. 

\section{Concluding remarks}
\label{sec:remarks}

In this paper, we put forth a new paradigm for deploying Information Technology
services inspired by the operation of the Bitcoin system. We 
identified four characteristics of resource based systems:
(i) resource-based operation, (ii) tokenomics, (iii) decentralized service provision, and (iv) rewards sharing, 
and
we elaborated on their objective and associated design challenges. 
We also presented a high-level blueprint showing how the paradigm 
can materialize in the form of a stake-based system. 

In more details, 
we identified the cryptographic 
and distributed protocol design challenge which asks  for a suitable PoX algorithm integrated with a protocol that facilitates decentralized service provision and the requirement of robust performance metrics.  
We also pointed to the economics and game theoretic considerations 
related to tokenomics and reward sharing. 

We   explored at some length
game theoretic aspects of pooling behavior 
and proved that a centralized equilibrium can be a strong Nash equilibrium
for a wide variety of reward and cost functions. This result is in the same spirit 
but more general than previous negative results presented in 
\cite{DBLP:conf/aft/KwonLKSK19,DBLP:conf/innovations/ArnostiW19,kiasto2021}
as it does not rely on the distribution of resources across owners, 
or a specific ``economies of scale''  assumption that dictates a superlinear
relation between rewards and costs, or the specific reward scheme used in Bitcoin, respectively. 

On a more positive note,  existence of  ``bad equilibria''   
does not prohibit the existence of other  equilibria with better
decentralization profiles. Furthermore, centralized pooling configurations
require coordination between agents that may prove difficult and costly to 
achieve. In this respect, ``bimodal'' systems, see \cite{kia2020}, where users can perform 
two (or more) actions to engage in system maintenance  (e.g., propose themselves as operators as well as vote others as operators) 
examples of which include \cite{DBLP:conf/eurosp/BrunjesKKS20,ceste2021}, show promise in this
direction. Furthermore, being able to investigate the Nash dynamics of the system  as e.g., performed in \cite{DBLP:conf/eurosp/BrunjesKKS20}  is crucial to 
demonstrate that the system reaches desirable equilibria expediently and moreover it can be also possible to demonstrate that bad equilibria can be avoided. 
It  is also worth pointing out that even if a resource based system manages to scale but eventually centralizes, the invested efforts may still not be completely in vain: the resulting constituent membership of the centralized pool organization may take over as a centralized system and offer the service.

The characteristics put forth in this paper are, in many respects, the minimum necessary. Other desirable features can be argued such as  the existence of multiple, open source software codebases that realize the system's protocol as well as  the existence of a governance sub-system that facilitates operations such as software updates not only for correcting the inevitable software bugs but also ensuring the system adapts to run time conditions that were unanticipated during the initial design. The problem of software updates in the decentralized setting is complex and more research is required, cf. \cite{DBLP:conf/esorics/CiampiKKZ20} for some first steps in terms of formally defining the problem in the context of distributed ledgers.  

The resource based paradigm is still in its very beginning. Nevertheless, 
we can identify some early precursors that include smart contract systems --- e.g.,  Ethereum and Cardano,  the name service of Namecoin, or the cross border payment system of Ripple. More recently, the Nym network \cite{nym} exemplified the paradigm in a novel context --- that of mix-nets and privacy-preserving communications. Extending the paradigm to additional use cases will motivate further advances in cryptography, distributed systems and game theory and eventually has the potential to change the landscape of global information technology. 

\section{Acknowledgements}

I am grateful to Dimitris Karakostas,  Aikaterini Panagiota Stouka and Giorgos Panagiotakos for helpful comments and discussions. 
%The  author was supported by H2020 Project Priviledge \# 780477.
% The authors are grateful to Christian Cachin,  Arpita Patra and anonymous reviewers for helpful comments and suggestions.
% in an earlier version of this draft. 

\bibliographystyle{plain}
\bibliography{consensus,crypto}

\newpage
{\small 
\begin{verbatim}
----------------------------
Version - Catalogue of changes
0.5  - first limited public release - Presentation at CBER Webinar. December 2nd 2021.
0.6  - various improvements to tokenomics section. first public release.
0.7  - inclusion of high-level blueprint section
0.75 - minor edits and improvements.
0.77 - typos and minor edits. 
\end{verbatim}
}

\end{document}